\documentclass[12pt]{article}

\usepackage{amsfonts}
\usepackage{amsmath}
\usepackage{amscd}
\usepackage{amssymb}
\usepackage{amsthm}
\usepackage{amsxtra}

\newcommand{\bbC}    {\mathbb C}

\newcommand{\bbR}    {\mathbb R}
\newcommand{\bbF}     {\mathbb F}

\newcommand{\bbZ}    {\mathbb Z}
\newcommand{\bbN}    {\mathbb N}
\newcommand{\bbI}      {\mathbb I}

\newcommand{\cC}    {{\cal C}}

\newcommand{\cE}    {{\cal E}}
\newcommand{\cF}    {{\cal F}}

\newcommand{\al}    {\alpha}
\newcommand{\be}    {\beta}
\newcommand{\de}    {\delta}

\newcommand{\eps}    {\epsilon}

\newcommand{\ga}    {\gamma}

\newcommand{\la}    {\lambda}

\newcommand{\om}    {\omega}
\newcommand{\Ga}    {\Gamma}
\newcommand{\La}    {\Lambda}

\newcommand{\A}      {\mathrm A}

\newcommand{\M}       {\mathrm M}

\newcommand{\X}       {\mathrm X}

\newcommand{\Ur}        {\mathrm U}

\newcommand{\Z}        {\mathrm Z}

\newcommand{\bA}     {\mathbf A}
\newcommand{\bB}      {\mathbf B}

\newcommand{\mbf}       {\mathbf f}

\newcommand{\bU}     {\mathbf U}

\newcommand{\bu}    {\mathbf u}

\newcommand{\bfI}     {\mathbf I}
\newcommand{\bfJ}     {\mathbf J}
\newcommand{\bj}     {\mathbf j}

\newcommand{\fD}    {\mathfrak D}

\newcommand{\fF}      {\mathfrak F}

\newcommand{\na}      {\nabla}

\newcommand{\bphi}   {\pmb{\phi}}
\newcommand{\bchi}   {\pmb{\chi}}
\newcommand{\bpsi}   {\pmb{\psi}}
\newcommand{\bxi}     {\pmb{\xi}}
\newcommand{\bet}     {\pmb{\eta}} 
\newcommand{\bzet}    {\pmb{\zeta}}

\newcommand{\bom}    {\pmb{\omega}}
\newcommand{\p}        {\partial}

\newcommand{\lan}      {\langle}
\newcommand{\ran}      {\rangle}
 
\newcommand{\Int}       {\textstyle{\int\!}}
\newcommand{\bcE}     {\pmb{\cal E}} 

\newcommand{\oF}       {\overset{\,\circ}\cF{}} 

\newcommand{\stup}[1]    {\stackrel{*}{#1}{\!}}

\DeclareMathOperator{\Hom}   {Hom}
\DeclareMathOperator{\End}   {End}

\DeclareMathOperator{\supp}  {supp\,}
\DeclareMathOperator{\im}    {Im}
\DeclareMathOperator{\ke}    {Ker}
\DeclareMathOperator{\com}   {{\scriptstyle\circ}\,}

\DeclareMathOperator{\ev}    {ev}

\DeclareMathOperator{\Div}     {Div}

\newtheorem{theorem}{Theorem}
\newtheorem{lemma}{Lemma}
\newtheorem{prop}{Proposition}

\theoremstyle{definition}

\newtheorem{defi}{Definition}
\newtheorem{ass}{Assumption}
%******************************************************************

\begin{document}

\title{On Hamiltonian operators in differential algebras}
\author
          {
             Zharinov V.V.
             \thanks{Steklov Mathematical Institute}
             \thanks{E-mail: zharinov@mi.ras.ru}
            }
\date{}
\maketitle

\begin{abstract}
Before we proposed an algebraic technics for the Hamiltonian approach 
to the evolution systems of partial differential equations, including systems 
with constraints. Here we further develop this approach and present the 
defining system of equations (suitable for the computer calculations), 
characterizing the Hamiltonian operators of the given form. 
We illustrate our technics by a simple example.
\end{abstract} 

{\bf Keywords:} differential algebra, Lie-Poisson structure, Jacobi identity, 
Hamiltonian operator, Hamiltonian evolution system. 

\section{Introduction.} 
In the previous paper \cite{Z1} we proposed an algebraic technics 
(based on the book \cite{PJO}, Chapter VII) for the Hamiltonian approach 
to the evolution systems of partial differential equations, including systems 
with constraints. Here we further develop this approach and present the 
defining system of equations (suitable for the computer calculations) 
for the classification of the Hamiltonian operators of the given form. 

We work in the frames of the algebra-geometrical approach to partial differential equations 
(see, for example, \cite{PJO}, \cite{Z2}, \cite{I}). We hope that our results will be 
helpful in the studies connected with the papers \cite{Z3} -- \cite{A1}.

We use the following general notations: 
\begin{itemize} 
	\item 
		$\bbF=\bbR,\bbC$, \quad $\bbN=\{1,2,3,\dots\}\subset\bbZ_+=\{0,1,2,\dots\}$; 
	\item 
		$\M=\{1,\dots,m\}$,  \quad $m\in\bbN$;
	\item 
		$\bbI=\bbZ^\M_+=\{i=(i^1,\dots,i^m)\mid i^\mu\in\bbZ_+, \ \mu\in\M\}$.
\end{itemize} 
All linear operations are done over the number field $\bbF$. 
The summation over repeated upper and lower indices is as a rule assumed. 

\section{The differential algebra.} 
The main object in our construction is a differential algebra $(\cF,\fD_H)$, where 
(see \cite{Z1}, for details and motivations):
\begin{itemize} 
	\item 
		$\cF$ is an unital (i.e., containing the unit element) associative commutative 
		algebra; 
	\item 
		$\fD=\fD(\cF)$ is the Lie algebra of all differentiations of the algebra $\cF$,   
		it is  assumed that $\fD=\fD_V\oplus_\cF\fD_H$; 
	\item 
		$\fD_V=\fD_V(\cF)$ is the {\it vertical} subalgebra 
		of the Lie algebra $\fD$
		with the formal $\cF$-basis $\p=\{\p_a\mid a\in\bA\}$, 
		$[\p_a,\p_b]=0$, $a,b\in\bA$, $\bA$ is an index set, 
		in particular, 
		$\fD_V=\big\{X=X^a\p_a \ \big| \ X^a\in\cF\big\}$, where for any 
		$L\in\cF$ the action $\p_a L\ne0$ only for a finite number of indices $a\in\bA$;
	\item 
		$\fD_H=\fD_H(\cF)$ is the {\it horizontal} subalgebra 
		of the Lie algebra $\fD$ with the $\cF$-basis 
		$D=\{D_\mu\mid \mu\in\M\}$, $[D_\mu,D_\nu]=0$, $\mu,\nu\in\M$; 
	\item 
		$[D,\fD_V]\subset\fD_V$, i.e. $[D_\mu,V]\in\fD_V$ 
		for all $\mu\in\M$ and $V\in\fD_V$.
\end{itemize} 
Remind, that the set $\fD$ has two algebraic structures:
the structure of a Lie algebra with the commutator 
$[X,Y]=X\com Y-Y\com X$, as the Lie bracket, 
and the structure of a $\cF$-module with $(K\cdot X)L=K\cdot(XL)$, 
which are related by {\it the matching condition} 
$[X,L\cdot Y]=(XL)\cdot Y+L\cdot[X,Y]$ for all $K,L\in\cF$, 
and for all $X,Y\in\fD$. 

For any index sets $\bA,\bB$ the set  
$\cF^\bA_\bB=\big\{\bet=(\eta^a_b) \ \big| \ \eta^a_b\in\cF, \ a\in\bA, \ b\in\bB\big\}$ 
has the structure of a $\cF$-module with the component-wise multiplication, i.e.,
$K\cdot\bet=(K\cdot\eta^a_b)\in\cF^\bA_\bB$ for all $K\in\cF$, $\bet\in\cF^\bA_\bB$. 

We denote by $\oF^\bA_\bB$ the $\cF$-module of all elements 
$\bet=(\eta^a_b)\in\cF^\bA_\bB$, s.t. for any upper index $a\in\bA$ 
only a finite number of components $\eta^a_b\ne0$. 
In particular, $\oF^\bA=\cF^\bA$ for any index set $\bA$, 
while $\oF_\bB$ consists of all {\it finite} elements $\bet=(\eta_b)\in\cF_\bB$, 
and $\oF_\bB=\cF_\bB$ iff (i.e., if and only if) the index set $\bB$ is finite.  
For example, by the above assumption, $\p K=(\p_a K)\in\oF_\bA$ for any 
$K\in\cF$, in particular, the linear mapping $\p : \cF\to\oF_\bA$ is defined.

For any $\cF$-module $\cF^\bA_\bB$ and any differentiation  $X\in\fD$ the linear mapping
$X : \cF^\bA_\bB\to\cF^\bA_\bB$ is defined component-wise, 
$\bet=(\eta^a_b)\mapsto X\bet=(X\eta^a_b)$. 
In particular, {\it the Leibniz rule} takes the form: 
$X(K\cdot\bet)=(XK)\cdot\bet+K\cdot(X\bet)$ for all $X\in\fD$, $K\in\cF$, $\bet\in\cF^\bA_\bB$. 

With this notations, the commutator 
$[D_\mu,\p_a]=\Ga^b_{\mu a}\cdot\p_b$ for all  $\mu\in\M$, $a\in\bA$, 
and we {\it assume that the symbol $\Ga=(\Ga^b_{\mu a})\in\oF^\bA_{\M\bA}$}. 
Note, that here the set $\M$ is finite, while the set $\bA$ is as a rule infinite, 
and in the accordance with the above definition of the $\cF$-module 
$\oF^\bA_{\M\bA}$, for any index $b\in\bA$ only a finite number of coefficients $\Ga^b_{\mu a}\ne0$. 

\section{Horizontal differential operators.} 
We call a linear mapping $P(D) : \cF\to\cF$ (i.e., $P(D)\in\End_\bbF(\cF)$)  
{\it a horizontal differential operator}, if $P(D)=P_i\cdot D^i$, where the indices $i=(i^1,\dots,i^m)\in\bbI$,  coefficients $P=(P_i)\in\oF_\bbI$, differential monomials $D^i=(D_1)^{i^1}\com\dots\com(D_m)^{i^m}$, 
so $P(D)L=P_i\cdot D^i L$ for all $L\in\cF$. 
The set of all horizontal differential operators we denote by $\cF[D]$. 
It is an unital associative algebra with the the multiplication defined by 
{\it the composition rule}. 
In the same way for any four  index sets $\bA,\bB,\bA',\bB'$ the $\cF[D]$-module $\cF^{\bA'\bB}_{\bB'\bA}[D]$ of matrix horizontal differential operators 
\begin{equation*} 
	P(D) : \oF^\bA_\bB\to\oF^{\bA'}_{\bB'}, \quad 
	\bet=(\eta^a_b)\mapsto\bchi=(\chi^{a'}_{b'}), 
	\quad \chi^{a'}_{b'}=P^{a'b}_{b'ai}\cdot D^i\eta^a_b, 
\end{equation*}
is defined, 
where $P=\big(P^{a'b}_{b'ai}\big)\in\oF^{\bA'\bB}_{\bB'\bA\bbI}$, 
$P(D)=\big(P^{a'b}_{b'ai}\cdot D^i\big)$.

For every pair of index sets $\bA,\bB$ there is defined the natural pairing 
\begin{equation*} 
	\lan\cdot,\cdot\ran : \oF^\bA_\bB\times\oF^\bB_\bA\to\cF, \quad 
	\bet=(\eta^a_b),\bzet=(\zeta^b_a)\mapsto\lan\bet,\bzet\ran=\eta^a_b\cdot\zeta^b_a. 
\end{equation*}

For every horizontal differential operator $P(D) : \oF^\bA_\bB\to\oF^{\A'}_{\bB'}$ 
there is defined the {\it Lagrange dual} operator 
\begin{equation*} 
	\stup{P}(D) : \oF^{\bB'}_{\bA'}\to\oF^\bB_\bA, \quad 
	\bzet=(\zeta^{b'}_{a'})\mapsto\bom=(\om^b_a), \quad
	\om^b_a=(-D)^i\big(\zeta^{b'}_{a'}\cdot P^{a'b}_{b'ai}\big), 
\end{equation*}
where 
$\stup{P}=\big(\stup{P}^{ba'}_{ab'i}\big)\in\oF^{\bB\bA'}_{\bA\bB'\bbI}$, 
$\stup{P}(D)=\big(\stup{P}^{ba'}_{ab'i}\cdot D^i\big)
=\big((-D)^i\com P^{a'b}_{b'ai}\big)$, 
and the {\it Green's formula} (the integration by parts)
\begin{equation*} 
	\lan\bzet,P(D)\bet\ran-\lan\stup{P}(D)\bzet,\bet\ran=D_\mu\psi^\mu 
	\quad\text{for all}\quad \bzet\in\oF^{\bB'}_{\bA'}, \ \bet\in\oF^\bB_\bA, 
\end{equation*}
holds with some {\it current} $\psi=\big(\psi^\mu(\bzet,P(D),\bet)\big)\in\cF^\M$. 

\begin{ass} 
We assume that the differential algebra $(\cF,\fD_H)$ 
is {\it horizontally exact}, i.e. a horizontal differential operator 
$P(D)=P_i\cdot D^i=0$ (i.e., $P(D)K=0$ for all $K\in\cF$) 
iff the coefficients $P_i=0$ for all $i\in\bbI$. 
\end{ass} 
Note, that in this case, the same is also true for matrix operators. 

\section{The main ingredients}
The main components, instruments and assumptions here are 
(see \cite{Z1}, for more detail): 
\begin{itemize} 
	\item 
		the horizontal differential operator 
		\begin{equation*} 
			D : \cF\to\cF_\M, \quad K\mapsto DK =(D_\mu K);
		\end{equation*}
	\item 
		the horizontal differential operator 
		\begin{equation*}
			\Div=-\stup{D} : \cF^\M\to\cF, 
			\quad  \psi=(\psi^\mu)\mapsto\Div\psi=D_\mu\psi^\mu;
		\end{equation*}
	\item 
		the horizontal differential operator 
		\begin{equation*}
			\na : \cF^\bA\to\cF^\bA_\M, \quad \bphi=(\phi^a)\mapsto\bet=(\eta^a_\mu), 
			\quad \eta^a_\mu=D_\mu\phi^a+\Ga^a_{\mu b}\cdot\phi^b; 
		\end{equation*}
	\item 
		the Lagrange dual operator 
		\begin{equation*} 
			\stup{\na} : \oF^\M_\bA\to\oF_\bA, \quad 
			\bchi=(\chi^\mu_a)\mapsto\mbf=(f_a), \quad 
			f_a=-D_\mu\chi^\mu_a+\Ga^b_{\mu a}\chi^\mu_b; 
		\end{equation*} 
	\item 
		the {\it vertical differential operator} 
		$\p : \cF\to\oF_\bA$, $K\mapsto\p K=(\p_a K)$. 
\end{itemize}  
The operator $\na$ is correctly defined due to assumption 
$\Ga=(\Ga^b_{\mu a})\in\oF^\bA_{\M\bA})$.

\begin{prop}\label{P1} 
The compositions $\p\com\Div, \stup{\na}\com\p : \cF^\M\to\oF_\bA$, 
and the equality $\p\com\Div+\stup{\na}\com\p=0$ holds.
\end{prop}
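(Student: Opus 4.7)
The plan is to unwind both compositions on an arbitrary $\psi=(\psi^\mu)\in\cF^\M$ and reduce the desired identity to the structural commutation rule $[D_\mu,\p_a]=\Ga^b_{\mu a}\cdot\p_b$ that ties the horizontal basis $D$ to the vertical basis $\p$.

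First I would verify that the two compositions really do land in $\oF_\bA$. Since $\Div\psi=D_\mu\psi^\mu\in\cF$, the standing assumption that $\p K$ is finite for every $K\in\cF$ gives $\p(\Div\psi)\in\oF_\bA$. For the other composition, the component-wise action of $\p$ on $\cF^\M$ is $(\p\psi)^\mu_a=\p_a\psi^\mu$, which lies in $\oF^\M_\bA$ because $\M$ is finite and each $\p\psi^\mu$ is finitely supported in $a$. Then $\stup{\na}(\p\psi)\in\oF_\bA$: the term $-D_\mu\p_a\psi^\mu$ is $a$-finitely supported term by term, while $\Ga^b_{\mu a}\cdot\p_b\psi^\mu$ involves only the finitely many $b$ with $\p_b\psi^\mu\ne0$, and for each such $b$ the hypothesis $\Ga\in\oF^\bA_{\M\bA}$ leaves only finitely many $a$ contributing.

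Next I would substitute the defining formulas for $\Div$, $\p$, and $\stup{\na}$ and add. For every $a\in\bA$,
\begin{equation*}
\big(\p\com\Div+\stup{\na}\com\p\big)(\psi)_a
=\p_a(D_\mu\psi^\mu)-D_\mu(\p_a\psi^\mu)+\Ga^b_{\mu a}\cdot\p_b\psi^\mu
=[\p_a,D_\mu]\psi^\mu+\Ga^b_{\mu a}\cdot\p_b\psi^\mu,
\end{equation*}
and the commutation rule $[D_\mu,\p_a]=\Ga^b_{\mu a}\cdot\p_b$ converts the first summand into $-\Ga^b_{\mu a}\cdot\p_b\psi^\mu$, exactly cancelling the second.

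This is a mechanical verification and I do not anticipate any real obstacle. The only points demanding care are the sign conventions inside the Lagrange dual $\stup{\na}$ (which produced the $-D_\mu$ and the index swap on $\Ga$) and the finiteness bookkeeping that guarantees each intermediate object belongs to the correct $\oF$-module so that $\stup{\na}$ is applicable. Once those are in place, the commutator identity closes the argument in one line.
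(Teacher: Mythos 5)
Your proposal is correct and follows essentially the same route as the paper: unwind both compositions on an arbitrary $\psi\in\cF^\M$ and cancel using the defining relation $[D_\mu,\p_a]=\Ga^b_{\mu a}\cdot\p_b$. The extra finiteness bookkeeping you include is sound but not spelled out in the paper, which simply invokes the stated codomains of $\p$ and $\stup{\na}$.
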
 
\begin{proof} 
Indeed, for any $\psi=(\psi^\mu)\in\cF^\M$ we have 
\begin{equation*} 
	(\p\com D_\mu)\psi^\mu+(\stup{\na}\com\p)\psi^\mu
	=\big(-[D_\mu,\p_a]+\Ga^b_{\mu a}\p_b\big)\psi^\mu=0, 
\end{equation*}
in the accordance with the definition of the symbol $\Ga$.
\end{proof}

We also need: 
\begin{itemize}
	\item 
		the linear subspace $\bcE=\ke\na$ of the linear space $\cF^\bA$; 
	\item 
		the set $\fD_E=\fD_E(\cF)=\{V\in\fD_V\mid [D_\mu,V]=0, \ \mu\in\M\}$ 
		of all {\it evolutionary} differentiations of the algebra $\cF$. 

\end{itemize} 
The set $\fD_E$ is a subalgebra of the Lie algebra $\fD_V$, because 
$[\fD_E,\fD_E]\subset\fD_E$ due to the {\it Jacobi identity} for commutators, 
but it is not a submodule of the $\cF$-module $\fD_V$ (see, e.g., \cite{Z2}). 

\begin{prop}\label{P2} 
The mapping $\ev : \bcE\to\fD_E$, $\bphi=(\phi^a)\mapsto\ev_{\bphi}=\phi^a\cdot\p_a$, 
is an isomorphism of linear spaces. 
Moreover, the structure of the Lie algebra on $\fD_E$ 
defines the isomorphic structure on $\bcE$ by the rule: 
\begin{equation*}
	[\bphi,\bpsi]=\bxi,  \quad \bphi=(\phi^a), \bpsi=(\psi^a), \bxi=(\xi^a),
	\quad \xi^a=\ev_{\bphi}\psi^a-\ev_{\bpsi}\phi^a, \quad a\in\bA.
\end{equation*} 
\end{prop}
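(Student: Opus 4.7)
The plan is to split the statement into three verifications: (i) that $\ev$ maps $\bcE$ into $\fD_E$, (ii) that $\ev$ is bijective onto $\fD_E$, and (iii) that the formula for $[\bphi,\bpsi]$ is forced by the Lie bracket on $\fD_E$. The computational heart of the proof is a single commutator calculation that links the operator $\na$ to the condition $[D_\mu,V]=0$; once this is in hand, everything else is bookkeeping using the fact that $\p=\{\p_a\}$ is an $\cF$-basis of $\fD_V$.

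First, I would compute $[D_\mu,\ev_{\bphi}]$ for an arbitrary $\bphi=(\phi^a)\in\cF^\bA$. Applying the matching condition $[X,L\cdot Y]=(XL)\cdot Y+L\cdot[X,Y]$ with $X=D_\mu$, $L=\phi^a$, $Y=\p_a$, and using the structural commutator $[D_\mu,\p_a]=\Ga^b_{\mu a}\cdot\p_b$, I get
\begin{equation*}
	[D_\mu,\ev_{\bphi}]=[D_\mu,\phi^a\cdot\p_a]
	=(D_\mu\phi^a)\cdot\p_a+\phi^a\cdot\Ga^b_{\mu a}\cdot\p_b
	=(D_\mu\phi^a+\Ga^a_{\mu b}\cdot\phi^b)\cdot\p_a=\eta^a_\mu\cdot\p_a,
\end{equation*}
where $\bet=(\eta^a_\mu)=\na\bphi$. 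Since $\p$ is an $\cF$-basis of $\fD_V$, the right-hand side vanishes iff $\eta^a_\mu=0$ for all $a,\mu$, i.e.\ iff $\bphi\in\ke\na=\bcE$. This simultaneously gives: $\ev$ sends $\bcE$ into $\fD_E$, and a preimage $\bphi\in\cF^\bA$ of any $V=V^a\cdot\p_a\in\fD_E$ must lie in $\bcE$.

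For bijectivity, injectivity follows from $\phi^a\cdot\p_a=0\Rightarrow \phi^a=0$ (again by the basis property of $\p$), and surjectivity is obtained by writing an arbitrary $V\in\fD_E\subset\fD_V$ as $V=V^a\cdot\p_a$ and setting $\bphi=(V^a)$; the boxed computation above, combined with $[D_\mu,V]=0$, forces $\na\bphi=0$, so $\bphi\in\bcE$ and $\ev_{\bphi}=V$.

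For the transported Lie bracket, I expand $[\ev_{\bphi},\ev_{\bpsi}]L$ directly using Leibniz twice and the commutativity $[\p_a,\p_b]=0$: the pure second-derivative terms cancel and what remains is
\begin{equation*}
	[\ev_{\bphi},\ev_{\bpsi}]L=\phi^a(\p_a\psi^b)\cdot\p_b L-\psi^b(\p_b\phi^a)\cdot\p_a L
	=(\ev_{\bphi}\psi^a-\ev_{\bpsi}\phi^a)\cdot\p_a L,
\end{equation*}
so $[\ev_{\bphi},\ev_{\bpsi}]=\ev_{\bxi}$ with $\xi^a=\ev_{\bphi}\psi^a-\ev_{\bpsi}\phi^a$. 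Closedness $\bxi\in\bcE$ is automatic since $\fD_E$ is a Lie subalgebra of $\fD_V$ and $\ev$ is injective with image $\fD_E$. The only mildly delicate point is keeping the signs and the summation-index shuffling straight in the last display; there is no serious obstacle, as the structural identities $[D_\mu,\p_a]=\Ga^b_{\mu a}\cdot\p_b$ and $[\p_a,\p_b]=0$ do all the work.
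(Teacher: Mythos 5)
Your proof is correct and is exactly the ``direct test'' that the paper invokes without writing out: the single computation $[D_\mu,\ev_{\bphi}]=(D_\mu\phi^a+\Ga^a_{\mu b}\cdot\phi^b)\cdot\p_a=(\na\bphi)^a_\mu\cdot\p_a$ identifies $\ke\na$ with $\fD_E$ under $\ev$, and the bracket computation is the standard cancellation of second-order terms via $[\p_a,\p_b]=0$. Nothing further is needed.
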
 
\begin{proof} 
The proof is done by the direct test.
\end{proof}

\begin{prop}\label{P3} 
 For every horizontal differential operator 
 $P(D)=\big(P^a_{ib}\cdot D^i\big) : \cF^\bB\to\cF^\bA$, 
for its Lagrange dual $\stup{P}(D)=\big((-D)^i\com P^a_{ib}\big) : \oF_\bA\to\oF_\bB$ 
 and for any $\bphi\in\bcE$ the following statements hold: 
 \begin{itemize} 
 	\item 
 		$[\ev_{\bphi}, P(D)]=\ev_{\bphi}P(D)=\big((\ev_{\bphi}P^a_{ib})\cdot D^i\big)$; 
	\item 
		$[\ev_{\bphi},\stup{P}(D)]=\ev_{\bphi}\stup{P}(D)
			=\big((-D)^i\com(\ev_{\bphi}P^a_{ib})\big)=[\ev_{\bphi},P(D)]^*$, 
 \end{itemize} 
 i.e., the evolutionary differentiation $\ev_{\bphi}$ acts here coefficient-wise. 
 \end{prop}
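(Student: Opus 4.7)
The plan is to use the defining property of evolutionary differentiations. By Proposition~\ref{P2} and the definition of $\fD_E$, the condition $\bphi\in\bcE$ is equivalent to $[D_\mu,\ev_{\bphi}]=0$ for every $\mu\in\M$. Hence $\ev_{\bphi}$ commutes with every differential monomial $D^i=(D_1)^{i^1}\com\dots\com(D_m)^{i^m}$, which is essentially the only ingredient needed; the rest is an application of the Leibniz rule for $\ev_{\bphi}$ acting on $\cF$-modules (declared in Section~2).

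For the first statement I would compute, for any $\bpsi=(\psi^b)\in\cF^\bB$, component-wise:
\begin{equation*}
	\ev_{\bphi}\big(P(D)\bpsi\big)^a
	=\ev_{\bphi}\big(P^a_{ib}\cdot D^i\psi^b\big)
	=(\ev_{\bphi}P^a_{ib})\cdot D^i\psi^b+P^a_{ib}\cdot\ev_{\bphi}(D^i\psi^b).
\end{equation*}
Using $[\ev_{\bphi},D^i]=0$, the second term rewrites as $P^a_{ib}\cdot D^i(\ev_{\bphi}\psi^b)=\big(P(D)\,\ev_{\bphi}\bpsi\big)^a$, so subtracting gives
$[\ev_{\bphi},P(D)]\bpsi=\big((\ev_{\bphi}P^a_{ib})\cdot D^i\psi^b\big)$, i.e.\ the horizontal differential operator whose coefficients are $\ev_{\bphi}P^a_{ib}$. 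This is exactly the first claim.

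For the second statement I would apply the same procedure to the dual, which acts on $\bzet=(\zeta_a)\in\oF_\bA$ by $(\stup{P}(D)\bzet)_b=(-D)^i(\zeta_a\cdot P^a_{ib})$. Commuting $\ev_{\bphi}$ through $(-D)^i$ by the same argument and applying the Leibniz rule yields
$[\ev_{\bphi},\stup{P}(D)]\bzet=\big((-D)^i(\zeta_a\cdot\ev_{\bphi}P^a_{ib})\big)$, which by inspection is the Lagrange dual of the operator $[\ev_{\bphi},P(D)]=\big((\ev_{\bphi}P^a_{ib})\cdot D^i\big)$ obtained in the first step; thus $[\ev_{\bphi},\stup{P}(D)]=[\ev_{\bphi},P(D)]^*$.

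There is no essential obstacle here beyond index bookkeeping: the whole proposition is the combination of $[\ev_{\bphi},D_\mu]=0$ with the Leibniz rule, and the only point to check carefully is that $\ev_{\bphi}$ acts coefficient-wise on $\cF^\bA_\bB$ so that the Leibniz rule applies uniformly whether $P$ is viewed as scalar-, vector-, or matrix-valued.
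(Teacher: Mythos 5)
Your proof is correct and follows exactly the route the paper intends: the paper's own proof is the one-line remark that it ``is based on the characteristic property of the evolutionary differentiations,'' which is precisely the fact $[D_\mu,\ev_{\bphi}]=0$ that you extract from Proposition~\ref{P2} and then combine with the Leibniz rule and the coefficient-wise action on $\cF$-modules. Nothing further is needed.
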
 
 \begin{proof} 
 The proof is based on the characteristic property of the evolutionary differentiations. 
 \end{proof}
 
 \begin{ass}\label{ASS} 
 We assume that there exist an index set $\A$ and a horizontal differential operator 
		$\bfJ=J(D) : \cF^\A\to\cF^\bA$, $\phi=(\phi^\al)\mapsto\bphi=(\phi^a)$, s.t.,  
		\begin{itemize} 
			\item[\sffamily (a)]\quad  
				the composition $\na\com\bfJ=0$, i.e., $\im\bfJ\subset\ke\na$, 
			\item[\sffamily (b)]\quad  
			the commutator $[\ev_{\bphi},\bfJ)]=0$ for all $\bphi\in\bcE$. 
		\end{itemize} 
In more detail, $\bfJ=\big(J^a_{\al i}\cdot D^i\big)$, 
$J=(J^a_{\al i})\in\oF^\bA_{\A\bbI}$, $\phi^a=J^a_{\al i}\cdot D^i\phi^\al$. 
 \end{ass}

\begin{prop}\label{P4} 
Let the horizontal differential operator $\stup{\bfJ} : \oF_\bA\to\oF_\A$ 
be the Lagrange dual to $\bfJ : \cF^\A\to\cF^\bA$, then 
\begin{itemize} 
	\item[\sffamily (a*)] \quad  
		the composition $\stup{\bfJ}\com\stup{\na}=0$, 
	\item[\sffamily (b*)] \quad 
		the commutator $[\ev_{\bphi},\stup{\bfJ}]=0$ for all $\bphi\in\bcE$.
\end{itemize}
\end{prop}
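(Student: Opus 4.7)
The plan is to derive each of (a*) and (b*) directly from the corresponding statement of Assumption~\ref{ASS} by exploiting two general compatibility properties of the Lagrange dual: it reverses compositions of horizontal differential operators, and it commutes with the action of evolutionary differentiations.

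For (a*), I would first record the general fact that for any composable horizontal differential operators $P(D),Q(D)$ one has $(P\com Q)^*=\stup{Q}\com\stup{P}$. This is obtained by applying Green's formula twice: for any suitable $\bzet,\bet$,
\begin{equation*}
  \lan\bzet,(P\com Q)\bet\ran
  =\lan\stup{P}\bzet,Q\bet\ran+D_\mu\psi_1^\mu
  =\lan\stup{Q}\stup{P}\bzet,\bet\ran+D_\mu(\psi_1^\mu+\psi_2^\mu),
\end{equation*}
and then invoking horizontal exactness (the remark after the Assumption, extended to matrix operators) to conclude that the composition $(P\com Q)^*$ and $\stup{Q}\com\stup{P}$ agree as operators, not just modulo a divergence. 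Applied to the identity $\na\com\bfJ=0$ from Assumption~\ref{ASS}(a), whose dual is the zero operator, this gives $\stup{\bfJ}\com\stup{\na}=(\na\com\bfJ)^*=0$.

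For (b*), I would invoke Proposition~\ref{P3}, whose second bullet asserts precisely that
\begin{equation*}
  [\ev_{\bphi},\stup{P}(D)]=[\ev_{\bphi},P(D)]^*
\end{equation*}
for every horizontal differential operator $P(D)$ and every $\bphi\in\bcE$. Specializing to $P(D)=\bfJ$ and using Assumption~\ref{ASS}(b), namely $[\ev_{\bphi},\bfJ]=0$, we immediately obtain $[\ev_{\bphi},\stup{\bfJ}]=0^*=0$.

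The main technical point, and the only place where any real work is needed, is the compatibility $(P\com Q)^*=\stup{Q}\com\stup{P}$ used in (a*); once this is in hand, the rest is a mechanical application of results already stated. Everything else reduces to quoting Proposition~\ref{P3} and the assumptions.
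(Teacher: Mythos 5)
Your overall strategy is legitimate, and part (b*) is essentially complete: Proposition~\ref{P3} with $P(D)=\bfJ$ gives $[\ev_{\bphi},\stup{\bfJ}]=\big((-D)^i\com(\ev_{\bphi}J^a_{\al i})\big)=[\ev_{\bphi},\bfJ]^*$, and Assumption~\ref{ASS}(b) together with horizontal exactness (which guarantees that the vanishing operator $[\ev_{\bphi},\bfJ]=\big((\ev_{\bphi}J^a_{\al i})\cdot D^i\big)$ has vanishing coefficients, so that its Lagrange dual is indeed the zero operator) yields $[\ev_{\bphi},\stup{\bfJ}]=0$. Since the paper's own proof is only ``the direct test'', this part needs no more.

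The gap is in your justification of the key step of (a*), the reversal property $(P\com Q)^{*}=\stup{Q}\com\stup{P}$. Applying Green's formula twice only shows that
\begin{equation*}
  \lan\big((P\com Q)^{*}-\stup{Q}\com\stup{P}\big)\bzet,\bet\ran\in\im\Div
  \quad\text{for all}\quad \bzet,\bet,
\end{equation*}
and horizontal exactness cannot upgrade this to equality of operators: that assumption concerns operators that annihilate every argument, and says nothing about elements of $\cF$ lying in $\im\Div$. To delete the divergence you must invoke the du Bois-Reymond property (take $\bet$ with a single arbitrary component $L\in\cF$ to conclude that each component of $\big((P\com Q)^{*}-\stup{Q}\com\stup{P}\big)\bzet$ vanishes), and only afterwards use exactness to pass to coefficients; note also that the du Bois-Reymond assumption is stated in the paper only after Proposition~\ref{P4}, so leaning on it here is slightly out of logical order. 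The cleaner repair, and presumably the intended ``direct test'', is to verify $(\na\com\bfJ)^{*}=\stup{\bfJ}\com\stup{\na}$ purely algebraically from the explicit formula $\stup{P}(D)=\big((-D)^{i}\com P_{i}\big)$, expanding the compositions by the Leibniz rule; this uses only the uniqueness of coefficients provided by horizontal exactness and no integral identities, after which Assumption~\ref{ASS}(a) gives $\stup{\bfJ}\com\stup{\na}=(\na\com\bfJ)^{*}=0$ exactly as you intend.
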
 
\begin{proof} 
The proof is done by the direct test.
\end{proof}

\section{The Lie-Poisson structures.} 
We shall use the notation: 
\begin{itemize} 
	\item 
		$\fF=\cF\big/\im\Div=\big\{\Int K=K+\Div\cF^\M \ \big| \ K\in\cF\big\}$; 
	\item 
		$\cE=\cF^\A$, $\cE^*=\Hom_\cF(\cE;\cF)=\oF_\A$.
\end{itemize}
Here, $\fF$ is the linear space of all {\it functionals} over $\cF$, 
$\cE$ is a linear space, and  $\cE^*$ is its dual. 
The natural projection $\cF\to\fF$ is defined by the rule: 
$K\mapsto \Int K=K+\Div\cF^\M$. 

\begin{ass} 
We assume that the differential algebra $(\cF,\fD_H)$ is of the 
{\it du Bois-Rey\-mond type}, i.e., it has the following property: 
for a given $K\in\cF$ the equality $\Int K\cdot L=0$ is valid 
for all $L\in\cF$ iff  $K=0$. 
\end{ass}

By Propositions \ref{P1} and  \ref{P4}, $\p : \im\Div\to\im\stup{\na}$ and $\stup{\bfJ}\com\stup{\na}=0$, 
hence we have the sequence of linear spaces: 
\begin{equation*} 
	\cF\big/\im\Div\xrightarrow{ \ \ \p \ \ }\oF_\bA\big/\im\stup{\na}
	\xrightarrow{ \ \ \ \stup{\bfJ} \ \ \ }\oF_\A.
\end{equation*}
In particular, the linear mapping (the {\it variational derivative}) 
\begin{itemize} 
	\item 
		$\de=\stup{\bfJ}\com\p : \fF\to\cE^*$, 
		\quad $\Int K\mapsto\de K=\stup{\bfJ}(\p K)$, 
\end{itemize}
is defined.

\begin{defi} The {\it Lie-Poisson structure over the differential algebra} $(\cF,\fD_H)$
is a bilinear mapping ({\it the Lie-Poisson bracket})
\begin{equation*} 
	\{\cdot,\cdot\} : \fF\times\fF\to\fF, \quad \Int K, \Int L\mapsto\big\{\Int K,\Int L\big\}, 
\end{equation*}
with the properties: 
\begin{itemize} 
	\item
		$\big\{\Int K,\Int L\big\}+\big\{\Int L,\Int K\big \}=0$; 
		\hfill ({\it skew-symmetry})
	\item
		$\bfJ\bfI\big(\Int K,\Int L,\Int M\big)
			=\big\{\Int K,\big\{\Int L,\Int M\big\}\big\} +\text{c.p.}=0$; 
			\hfill ({\it Jacobi identity})
\end{itemize}
where the abbreviation ``c.p.'' stands for the cyclic permutation 
of arguments $\Int K,\Int L,\Int M\in\fF$. 
In this case, the pair $(\fF,\{\cdot,\cdot\})$ is a Lie algebra. 
\end{defi}

\begin{defi} 
We define the bracket $\{\cdot,\cdot\}$ by the rule:
\begin{equation*} 
	\{\Int K,\Int L\}=\Int\lan\de K,\La(D)\de L\ran 
	\quad\text{for all}\quad \Int K,\Int L\in\fF,
\end{equation*}
where a horizontal differential operator 
\begin{equation*} 
	\La(D) :\cE^*\to\cE, \quad f=(f_\al)\mapsto\phi=(\phi^\al), 
	\quad \phi^\al=\La^{\al\be}_i\cdot D^if_\be, \quad \al\in\A, 
\end{equation*} 
the set of the coefficients $\La=(\La^{\al\be}_i)\in\oF^{\A\A}_\bbI$. 
{\it Below we always assume that the operator $\La(D)$ is the Lagrange skew-adjoint}, 
i.e.,  $\La(D)+\!\stup{\La}(D)=0$. 
\end{defi}
In this case, the bracket $\{\cdot,\cdot\}$ is skew-symmetric, 
and the problem left is to find a possibly simple and effective test for the operator 
$\La(D)$ to be {\it Hamiltonian}, i.e., to ensure the Jacobi identity to be valid. 

In the paper \cite{Z1} we have proved the following algebraic analog of the 
Theorem 7.8 from the book \cite{PJO}. 
\begin{lemma}\label{L1} 
The Jacoby identity has the following representation: 
\begin{equation*} 
	\bfJ\bfI(\Int K,\Int L,\Int M)
		=\Int\lan\de K,[\ev_{\bphi(L)},\La(D)]\de M\ran+\text{\rm c.p.}=0, 
\end{equation*} 
where $\bphi(L)=(\bfJ\com\La(D)\com\de)L$, 
$\La(D)=(\La^{\al\be}_i\cdot D^i)$ is a horizontal differential skew-adjoint operator. 
\end{lemma}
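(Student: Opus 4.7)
The plan is to bring the triple bracket into evolutionary form, expand it by the Leibniz rule, and isolate the coefficient-wise action of $\ev_{\bphi(K)}$ on $\La(D)$ guaranteed by Proposition~\ref{P3}.

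First I would rewrite each basic bracket as an evolutionary differentiation. Using $\de=\stup{\bfJ}\com\p$ together with Green's formula $\Int\lan\stup{\bfJ}\bzet,\bet\ran=\Int\lan\bzet,\bfJ\bet\ran$, one has
\begin{equation*}
\Int\lan\de K,\La(D)\de L\ran
=\Int\lan\p K,\bfJ\La(D)\stup{\bfJ}\p L\ran
=\Int\lan\p K,\bphi(L)\ran
=\Int\ev_{\bphi(L)}(K).
\end{equation*}
Combined with the skew-symmetry of $\{\cdot,\cdot\}$, this gives $\{\Int K,\{\Int L,\Int M\}\}=-\Int\ev_{\bphi(K)}\{L,M\}$, and reduces the Jacobi identity to
\begin{equation*}
\bfJ\bfI=-\Int\sum_{\text{\rm c.p.}}\ev_{\bphi(K)}\{L,M\}.
\end{equation*}

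Next, I would apply the Leibniz rule to $\{L,M\}=\lan\de L,\La(D)\de M\ran$. By Proposition~\ref{P3}, the action $\ev_{\bphi(K)}\La(D)=[\ev_{\bphi(K)},\La(D)]$ is coefficient-wise, so the expansion splits into three cyclic sums $S_{1}+S_{2}+S_{3}$, the central one being
\begin{equation*}
S_{2}=\Int\sum_{\text{\rm c.p.}}\lan\de L,[\ev_{\bphi(K)},\La(D)]\de M\ran,
\end{equation*}
while $S_{1}$ and $S_{3}$ are the analogous ``outer'' sums in which $\ev_{\bphi(K)}$ hits $\de L$ or $\de M$ respectively. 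The block $S_{2}$ already matches the claim: Proposition~\ref{P3} combined with $\stup{\La}=-\La$ gives $[\ev_{\bphi},\La(D)]^{*}=[\ev_{\bphi},\stup{\La}(D)]=-[\ev_{\bphi},\La(D)]$, so the commutator is itself Lagrange skew-adjoint. Swapping the two arguments in each term of $S_{2}$ under $\Int$ and relabelling the cyclic index yields $S_{2}=-\Int\sum_{\text{\rm c.p.}}\lan\de K,[\ev_{\bphi(L)},\La(D)]\de M\ran$, i.e.\ minus the right-hand side of the statement.

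The technical heart of the proof is to show $S_{1}+S_{3}=0$. For this I would use Proposition~\ref{P4}(b*) to write $\ev_{\bphi(K)}\de L=\stup{\bfJ}\ev_{\bphi(K)}\p L$; apply Green's formula on $\bfJ$ (and, in $S_{3}$, the skew-adjointness of $\La(D)$) to bring each summand into the form $\Int\bphi(X)^{a}\bphi(Y)^{b}\,\p_{b}\p_{a}Z$, with $(X,Y,Z)$ running through cyclic rearrangements of $(K,L,M)$; and exploit $\p_{a}\p_{b}=\p_{b}\p_{a}$ to produce pairwise cancellation of the six resulting terms after cyclic relabelling. Together with the previous paragraph this yields $\bfJ\bfI=-(S_{1}+S_{2}+S_{3})=-S_{2}$, which is the claimed identity. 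The principal obstacle is the sign and index bookkeeping needed to match the skew-adjointness rearrangement in $S_{2}$ to the precise cyclic pattern on the right-hand side of the statement, and to verify the six-term cancellation in $S_{1}+S_{3}$.
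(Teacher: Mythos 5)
Correct. Note that this paper contains no proof of Lemma~\ref{L1} to compare against --- it is imported from \cite{Z1} as the algebraic analogue of Olver's Theorem~7.8 --- but your argument is a faithful reconstruction of that standard proof: the reduction $\{\Int K,\Int L\}=\Int\ev_{\bphi(L)}K$, the three-term Leibniz split, the identification of the middle block with the right-hand side via the skew-adjointness $[\ev_{\bphi},\La(D)]^{*}=-[\ev_{\bphi},\La(D)]$ from Proposition~\ref{P3}, and the cancellation of the outer blocks. The one step you leave in outline, $S_{1}+S_{3}=0$, does go through exactly as you describe: using $[\ev_{\bphi},\stup{\bfJ}]=0$ and Green's formula each summand becomes $\pm\Int\phi^{b}(X)\,\phi^{a}(Y)\,\p_{b}\p_{a}Z$, and the six terms cancel in pairs by $\p_{a}\p_{b}=\p_{b}\p_{a}$ --- the algebraic stand-in for the self-adjointness of the second variational derivative that Olver's proof invokes. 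All signs check out, so the proposal is sound and essentially coincides with the cited proof.
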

Here we refine this result to a form more convenient for actual calculations. 

\begin{defi} 
We shall say that the Jacobi identity holds {\it in the strengthened form} if 
\begin{equation*} 
	\Int\lan f,[\ev_{\bphi(g)},\La(D)]h\ran+\text{\rm c.p.}=0, 
\end{equation*} 
for all $f,g,h\in\cE^*$, where $\bphi(g)=(\bfJ\com\La(D))g$. 
\end{defi} 

\begin{lemma}\label{L2} 
In the conditions of the Lemma \ref{L1} the Jacoby identity 
in the strengthened form can be written as 
\begin{equation*} 
	R^{\al\be\ga}_{kl}\cdot D^k g_\be\cdot D^l h_\ga
		+(-D)^l\big(R^{\be\ga\al}_{kl}\cdot g_\be\cdot D^k h_\ga\big)
		+(-D)^k\big(R^{\ga\al\be}_{kl}\cdot D^l g_\be\cdot h_\ga\big)=0
\end{equation*}
for all $g,h\in\cE^*$, $\al,\be,\ga\in\A$, $k,l\in\bbI$, 
where the coefficients 
\begin{equation*}
	R^{\al\be\ga}_{kl}=R^{\al\be\ga}_{kl}(\La)=\tbinom{i}{r}
	J^a_{\eps i}\cdot D^{i-r}\La^{\eps\be}_{k-r}\cdot\p_a\La^{\al\ga}_l. 
\end{equation*}
\end{lemma}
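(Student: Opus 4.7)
\emph{Plan.} The identity is essentially a rewriting of the strengthened Jacobi identity $\Int\lan f,[\ev_{\bphi(g)},\La(D)]h\ran+\text{c.p.}=0$ in coordinates, so the proof is by direct computation in three steps: unfold all the definitions to identify the symbol $R^{\al\be\ga}_{kl}$, shuffle the cyclic sum by integration by parts so as to remove every derivative from $f_\al$, and then strip $f_\al$ via the du Bois-Reymond property. The tools I would rely on are Proposition \ref{P3} (which lets $\ev_{\bphi(g)}$ pass through $\La(D)$ coefficient-wise), the multi-index Leibniz rule $D^i(A\cdot B)=\binom{i}{r}D^{i-r}A\cdot D^rB$, and Green's formula.

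First I would unpack $\bphi(g)=\bfJ\com\La(D)g$ componentwise as $\phi^a=J^a_{\eps i}\cdot D^i(\La^{\eps\be}_{k'}\cdot D^{k'}g_\be)$, apply the Leibniz rule, and reindex $k:=r+k'$ (so $k'=k-r$) so that the coefficient of $D^kg_\be$ in $\phi^a$ becomes $\binom{i}{r}J^a_{\eps i}\cdot D^{i-r}\La^{\eps\be}_{k-r}$ summed over $i,r,\eps,a$. By Proposition \ref{P3}, $[\ev_{\bphi(g)},\La(D)]h$ has $\al$-component $\phi^a\cdot\p_a\La^{\al\ga}_l\cdot D^lh_\ga$, and combining these two expressions identifies the integrand of the first summand of the strengthened Jacobi identity as $R^{\al\be\ga}_{kl}\cdot f_\al\cdot D^kg_\be\cdot D^lh_\ga$ with precisely the symbol $R^{\al\be\ga}_{kl}$ displayed in the lemma.

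Next I would write the two cyclic permutations $(f,g,h)\mapsto(g,h,f)\mapsto(h,f,g)$ in the same form, obtaining integrands $R^{\be\ga\al}_{kl}\cdot g_\be\cdot D^kh_\ga\cdot D^lf_\al$ and $R^{\ga\al\be}_{kl}\cdot h_\ga\cdot D^kf_\al\cdot D^lg_\be$. Under $\Int$ (i.e.\ modulo $\im\Div$), Green's formula allows me to transfer the outer derivatives $D^l$ and $D^k$ off of $f_\al$, introducing the signs $(-D)^l$ and $(-D)^k$ in front of the remaining products. The whole cyclic sum therefore rewrites as $\Int f_\al\cdot X^\al$, where $X^\al$ is exactly the left-hand side of the equation claimed in the lemma. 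A final appeal to the du Bois-Reymond assumption, applied componentwise in $\al$ (since each $f_\al\in\cF$ is arbitrary), forces $X^\al=0$ for every $\al\in\A$, which is the asserted identity.

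The one delicate point is the multi-index bookkeeping in the first step: the reindexing $k'\leftrightarrow k-r$ and the resulting binomial coefficient $\binom{i}{r}$ have to come out exactly matching the displayed formula for $R^{\al\be\ga}_{kl}$, and one must remember that the sum is over all $i,r$ with $r\le i$ and $r\le k$ componentwise. The cyclic relabeling, the integration by parts, and the stripping of $f_\al$ are then routine once Proposition \ref{P3} and the du Bois-Reymond assumption are in hand.
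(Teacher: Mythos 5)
Your proposal is correct and follows essentially the same route as the paper's own proof: unpack $\bphi(g)$ via the Leibniz rule and the reindexing $k\mapsto k-r$ to identify $R^{\al\be\ga}_{kl}$, let Proposition \ref{P3} make the commutator act coefficient-wise, relabel the cyclic permutations, integrate by parts twice to strip all derivatives from $f_\al$, and finish with the du Bois--Reymond property. No gaps; the index bookkeeping you flag as the delicate point is exactly the computation the paper carries out.
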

\begin{proof} 
We need to refine the expression 
\begin{equation*}
\Int\lan f,[\ev_{\bphi(g)},\La(D)]h\ran
=\Int f_\al\cdot[\ev_{\bphi(g)},\La^{\al\ga}_l\cdot D^l]h_\ga, 
\end{equation*}
where $f,g,h\in\cE^*$. 
Here, $\bphi(g)=(\bfJ\com\La(D))g$, i.e., 
\begin{align*}
	\phi^a(g)
	&=J^a_{\eps i}\cdot D^i(\La^{\eps\be}_k\cdot D^k g_\be) 
		=J^a_{\eps i}\cdot\tbinom{i}{r}\cdot D^{i-r}\La^{\eps\be}_k\cdot 
		D^{k+r}g_\be \\
	&=\tbinom{i}{r}\cdot J^a_{\eps i}\cdot D^{i-r}\La^{\eps\be}_{k-r}\cdot 
		D^k g_\be.
\end{align*}
Hence, by Proposition \ref{P3}, 
\begin{align*} 
	\Int f_\al\cdot[\ev_{\bphi(g)},\La^{\al\ga}_l\cdot D^l]h_\ga 
	&=\Int f_\al\cdot\tbinom{i}{r}\cdot J^a_{\eps i}\cdot D^{i-r}\La^{\eps\be}_{k-r}
		\cdot D^k g_\be\cdot\p_a\La^{\al\ga}_l\cdot D^l h_\ga \\
	&=\Int R^{\al\be\ga}_{kl}\cdot f_\al\cdot D^k g_\be\cdot D^l h_\ga.
\end{align*}
Thus, 
\begin{align*}  
	&\Int\lan\de K,[\ev_{\bphi(L)},\La(D)]\de M\ran+\text{\rm c.p.} \\
	&=\Int R^{\al\be\ga}_{kl}\cdot\big(f_\al\cdot D^k g_\be\cdot D^l h_\ga
		+g_\al\cdot D^k h_\be\cdot D^l f_\ga+h_\al\cdot D^k f_\be\cdot D^l g_\ga\big) \\
	&=\Int\big(R^{\al\be\ga}_{kl}\!\cdot\! f_\al\!\cdot\! D^k g_\be\!\cdot\! D^l h_\ga
		+R^{\be\ga\al}_{kl}\!\cdot\! D^l f_\al\!\cdot\! g_\be\!\cdot\! D^k h_\ga
		+R^{\ga\al\be}_{kl}\!\cdot\! D^k f_\al\!\cdot\! D^l g_\be\!\cdot\! h_\ga\big) \\
	&=\Int f_\al\!\cdot\!\big(R^{\al\be\ga}_{kl}\!\cdot\! D^k g_\be\!\cdot\! D^l h_\ga
		\!+\!(-D)^l(R^{\be\ga\al}_{kl}\!\cdot\! g_\be\!\cdot\! D^k h_\ga)
		\!+\!(-D)^k(R^{\ga\al\be}_{kl}\!\cdot\! D^l g_\be\!\cdot\! h_\ga)\big), 
\end{align*}
where we twice used the Green's formula of the integration by parts. 
Clear, the last representation and the assumed du Bois-Reymond property of 
the differential algebra $(\cF,\fD_H)$ imply our claim. 
\end{proof}

\begin{theorem}\label{T1} 
In the conditions of the Lemma \ref{L2} the Jacoby identity 
in the strengthened form is reduced to the system: 
$Q=Q(\La)=(Q^{\al\be\ga}_{kl})=0$, or in more detail,
\begin{equation*}
	Q^{\al\be\ga}_{kl}=0 \quad\text{for all}\quad  \al,\be,\ga\in\A, \ \ k,l\in\bbI, 
\end{equation*}
where 
\begin{equation*} 
	Q^{\al\be\ga}_{kl}=Q^{\al\be\ga}_{kl}(\La)=R^{\al\be\ga}_{kl}
	+(-1)^j\tbinom{j}{p,k,l-i}D^p R^{\be\ga\al}_{ij}
	+(-1)^i\tbinom{i}{p,k-j,l}D^p R^{\ga\al\be}_{ij}, 
\end{equation*}
and $\tbinom{j}{p,k,l-i},\tbinom{i}{p,k-j,l}$ are the trinomial coefficients
\footnote{Remind, the polynomial coefficients are 
$\binom{i}{j_1\dots j_p}=\frac{i!}{j_1!\dots j_p!}$, $i=j_1+\dots j_p$. 
In particular, the binomial coefficient $\binom{i}{j}=\binom{i}{j,i-j}$.}. 
\end{theorem}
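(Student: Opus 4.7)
The plan is to expand the two powers of $D$ in the identity of Lemma \ref{L2} via the higher-order Leibniz rule for a triple product, relabel the summation indices so that the entire expression becomes a single sum in $(k,l) \in \bbI^2$ whose general term is a coefficient multiplied by the monomial $D^k g_\be \cdot D^l h_\ga$, and then appeal to horizontal exactness (Assumption~1) to isolate each coefficient.

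Concretely, I would first rename the dummy indices $(k,l)$ in the second and third summands of Lemma \ref{L2}'s identity to $(i,j)$, reserving the letters $k, l$ for the final indexing. Expanding the two powers $(-D)^j\bigl(R^{\be\ga\al}_{ij} \cdot g_\be \cdot D^i h_\ga\bigr)$ and $(-D)^i\bigl(R^{\ga\al\be}_{ij} \cdot D^j g_\be \cdot h_\ga\bigr)$ by the trinomial Leibniz rule $D^n(A\cdot B\cdot C) = \tbinom{n}{p,q,r} D^p A \cdot D^q B \cdot D^r C$ (sum over $p+q+r = n$), I would set $k := q$, $l := i+r$ in the first and $k := j+q$, $l := r$ in the second. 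Solving for $(p,q,r)$ in each case yields trinomial coefficients of exactly the form $\tbinom{j}{p, k, l-i}$ and $\tbinom{i}{p, k-j, l}$ with signs $(-1)^j$ and $(-1)^i$, so the identity of Lemma \ref{L2} collapses to
\begin{equation*}
	\sum_{k, l \in \bbI} Q^{\al\be\ga}_{kl} \cdot D^k g_\be \cdot D^l h_\ga = 0, \quad \al, \be, \ga \in \A,
\end{equation*}
with $Q^{\al\be\ga}_{kl}$ exactly as in the theorem statement.

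To conclude $Q^{\al\be\ga}_{kl}(\La) = 0$ for all indices, I would apply horizontal exactness twice. Fix $\be, \ga$ and restrict to $g, h \in \cE^*$ whose only nonzero components are the $\be$- and $\ga$-ones. For an arbitrary but fixed $h_\ga \in \cF$ the remaining expression is a horizontal differential operator in the single argument $g_\be$ whose coefficient at $D^k$ is $\sum_l Q^{\al\be\ga}_{kl} \cdot D^l h_\ga$; horizontal exactness forces this coefficient to be zero for every $k$. A second application of horizontal exactness, now with $h_\ga$ running freely over $\cF$, then yields $Q^{\al\be\ga}_{kl} = 0$ for every $l$ as well.

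The main obstacle is the combinatorial bookkeeping in the relabeling step: one must track the signs $(-1)^i$ and $(-1)^j$ through both reindexings and verify that the trinomial coefficients really reassemble into the form $\tbinom{j}{p,k,l-i}$ and $\tbinom{i}{p,k-j,l}$, with the non-negativity conditions $p, q, r \ge 0$ absorbed into the convention that $\tbinom{n}{\cdot,\cdot,\cdot}$ vanishes outside its natural range. Once this bookkeeping is settled, the extraction of coefficients via horizontal exactness is routine.
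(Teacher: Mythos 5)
Your proposal is correct and follows essentially the same route as the paper: expand the two $(-D)$-powers by the trinomial Leibniz rule, relabel the summation indices so everything collapses to $Q^{\al\be\ga}_{kl}\cdot D^k g_\be\cdot D^l h_\ga=0$, and then invoke horizontal exactness (in its matrix form, which your two-step extraction in $g$ and then $h$ realizes explicitly). The index bookkeeping you describe reproduces the coefficients $(-1)^j\tbinom{j}{p,k,l-i}$ and $(-1)^i\tbinom{i}{p,k-j,l}$ exactly as in the paper's computation.
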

\begin{proof} 
Indeed, here 
\begin{align*} 
	&R^{\al\be\ga}_{kl}\cdot D^k g_\be\cdot D^l h_\ga
		+(-D)^l\big(R^{\be\ga\al}_{kl}\cdot g_\be\cdot D^k h_\ga\big)
		+(-D)^k\big(R^{\ga\al\be}_{kl}\cdot D^l g_\be\cdot h_\ga\big) \\
	&=R^{\al\be\ga}_{kl}\cdot D^k g_\be\cdot D^l h_\ga
	   +(-1)^j\tbinom{j}{pqr}D^p R^{\be\ga\al}_{kj}\cdot D^q g_\be\cdot d^{r+k}h_\ga \\
	&+(-1)^i\tbinom{i}{pqr}D^p R^{\ga\al\be}_{il}\cdot D^{q+l}g_\be\cdot D^r h_\ga \\ 
	&=R^{\al\be\ga}_{kl}\cdot D^k g_\be\cdot D^l h_\ga 
	  +(-1)^j\tbinom{j}{p,k,l-i}D^p R^{\be\ga\al}_{ij}\cdot D^k g_\be\cdot D^l h_\ga \\
      &+(-1)^i \tbinom{i}{p,k-j,l}D^p R^{\ga\al\be}_{ij}\cdot D^k g_\be\cdot D^l h_\ga 
        =Q^{\al\be\ga}_{kl}\cdot D^k g_\be\cdot D^l h_\ga. 
\end{align*} 
To complete the proof it is enough to use the assumed horizontal exactness 
of the differential algebra $(\cF,\fD_H)$. 
\end{proof} 

\section{Evolution without constraints.} 
Here (see \cite{Z1}, for more detail), 
the algebra $\cF=\cC^\infty_{fin}(\X\bU)$ is the algebra of all smooth functions 
on the infinite dimensional space $\X\bU=\X\times\bU$, 
depending on a finite number of the arguments $x^\mu,u^\al_i$, where 
\begin{itemize} 
	\item 
		$\X=\bbR^\M=\{x=(x^\mu)\mid x^\mu\in\bbR, \ \mu\in\M\}$ 
		is the space of independent variables; 
	\item 
		$\Ur=\bbR^\A=\{u=(u^\al)\mid u^\al\in\bbR, \ \al\in\A\}$ 
		is the space of dependent variables, $\A$ is a finite index set; 
	\item  
		$\bU=\bbR^\A_\bbI=\{\bu=(u^\al_i)\mid u^\al_i\in\bbR, \ \al\in\A, \ i\in\bbI\}$ 
		is the space of differential variables, $u^\al=u^\al_0$. 
\end{itemize}
The Lie subalgebra $\fD_H$ has the $\cF$-basis $D=\{D_\mu\mid \mu\in\M\}$, 
where the {\it total partial derivatives} 
\begin{equation*} 
	D_\mu=\p_{x^\mu}+u^\al_{i+(\mu)}\p_{u^\al_i}, 
	\quad i+(\mu)=(i^1,\dots,i^\mu+1,\dots,i^m). 
\end{equation*}
The Lie subalgebra $\fD_V$ has the $\cF$-basis $\{\p_{u^\al_i}\mid \al\in\A, \ i\in\bbI\}$, 
thus, here, $a=\tbinom{\al}{i}$, $\bA=\tbinom{\A}{\bbI}$, 
and the commutator 
\begin{equation*}
	[D_\mu,\p_{u^\al_i}]=-\p_{u^\al_{i-(\mu)}}, 
	\quad\text{i.e.,}\quad \Ga^{i\be}_{\mu\al j}=-\de^\be_\al\de^i_{j+(\mu)}, 
	\quad \al,\be\in\A, \ i,j\in\bbI, \ \mu\in\M.
\end{equation*} 
Note, the pair $\tbinom{\be}{j}$ is the single upper index, 
while $\tbinom{i}{\al}$ is the single lower index.  

Further, in this situation, 
\begin{itemize} 
	\item 
		$\na : \cF^\A_\bbI\to\cF^\A_{\M\bbI}, \quad  
		\bphi=(\phi^\al_i)\mapsto\bet=(\eta^\al_{\mu i}), \quad 
		\eta^\al_{\mu i}=D_\mu\phi^\al_i-\phi^\al_{i+(\mu)}$; 
	\item 
		$\bj : \cF^\A\to\cF^\A_\bbI, \quad \phi=(\phi^\al)\mapsto\bphi=(\phi^\al_i), 
		\quad \phi^\al_i=D^i\phi^\al=\de_{ik}\de^\al_\be\cdot D^k\phi^\be$; 
	\item 
		$\ke\bj=0, \quad \im\bj=\ke\na$, \quad  
		$[\ev_{\bphi},\bj]=0$ for all $\bphi\in\bcE$; 
	\item 
		$Q^{\al\be\ga}_{kl}=R^{\al\be\ga}_{kl}
			+(-1)^j\tbinom{j}{p,k,l-i}D^p R^{\be\ga\al}_{ij}
			+(-1)^i\tbinom{i}{p,k-j,l}D^p R^{\ga\al\be}_{ij}$;
	\item 
		$R^{\al\be\ga}_{kl}=\tbinom{i}{r}D^{i-r}\La^{\eps\be}_{k-r}\cdot
			\p_{u^\eps_i}\La^{\al\ga}_l$, \quad  $\al,\be,\ga\in\A$, 
			\quad  $k,l,i,r,j,p\in\bbI$.
\end{itemize}

\section{The simplest case.}
Here, 
\begin{itemize} 
	\item 
		$\X=\bbR$, \quad $\Ur=\bbR$, \quad $\bU=\bbR_\bbI$, 
		\quad $\bbI=\bbZ_+$, \quad $\cF=\cC^\infty_{fin}(\X\bU)$;
	\item  
		$D=\p_x+u_{i+1}\p_{u_i}$, \quad  $[D,\p_{u_i}]=-\p_{u_{i-1}}$, 
		\quad $i\in\bbI$.
\end{itemize}
Further in this case, the horizontal differential operator 
$\La(D)=\sum_{i=0}^s\La_i\cdot D^i$, 
where {\it the order} $s\in\bbN$, the coefficients 
$\La_i=\La_i(u_0,u_1,\dots,u_{n(i)})\in\cF$, $0\le i\le s$.  
In particular (see Lemma \ref{L2}, Theorem \ref{T1}), 
\begin{itemize} 
	\item 
		$R_{kl}=\binom{i}{r}D^{i-r}\La_{k-r}\cdot\p_{u_i}\La_l$, 
		\quad $k,l,i,r,j,p\in\bbI$; 
	\item 
		$Q_{kl}=R_{kl}+(-1)^j\binom{j}{p,k,l-i}D^p R_{ij}
			+(-1)^i\binom{i}{p,k-j,l}D^p R_{ij}$; 
	\item 
		$\supp R=\{(k,l)\in\bbI^2 \mid l\le s, \ \ k\le n(l)+s \}$; 
	\item 
		$sp\,R_{kl}=\{(r,i)\in\bbI^2\mid\max\{0,k-s\}\le r\le k, \ r\le i\le n(l)\}$.
\end{itemize} 
\begin{prop}\label{5} 
With these notations, 
\begin{equation*} 
	R_{kl}=\sum_{sp\,R_{kl}}
	\tbinom{i}{r}D^{i-r}\La_{k-r}\cdot\p_{u_i}\La_l,
	\quad (k,l)\in\supp R,
\end{equation*}
and $R_{kl}=0$ for $(k,l)\notin\supp R$. 
\end{prop}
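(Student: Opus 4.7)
The plan is a straightforward bookkeeping argument: the claim is just a precise accounting of which summation indices $(r,i)$ produce nonzero contributions in the defining formula
$$R_{kl}=\tbinom{i}{r}D^{i-r}\La_{k-r}\cdot\p_{u_i}\La_l,$$
where the summation over $r,i\in\bbI$ is implicit. I would list the vanishing conventions and then intersect them.

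First, I would record the three sources of vanishing. Since $\La(D)=\sum_{j=0}^s\La_j\cdot D^j$, we have $\La_j=0$ whenever $j<0$ or $j>s$; applied to the two factors this gives $0\le l\le s$ and $0\le k-r\le s$, i.e.\ $\max\{0,k-s\}\le r\le k$. Next, the binomial coefficient $\tbinom{i}{r}$ together with the exponent $i-r$ forces $i\ge r\ge 0$. Finally, since $\La_l=\La_l(u_0,\ldots,u_{n(l)})$, the partial derivative $\p_{u_i}\La_l=0$ whenever $i>n(l)$. Intersecting these four conditions on $(r,i)$ yields exactly
$$\max\{0,k-s\}\le r\le k,\qquad r\le i\le n(l),$$
which is the set $sp\,R_{kl}$ in the statement; this proves the displayed formula for $R_{kl}$ when $(k,l)\in\supp R$.

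Next I would verify that $R_{kl}=0$ outside $\supp R$. There are two cases. If $l>s$ then $\La_l=0$ identically, so every term vanishes. If $l\le s$ but $k>n(l)+s$, then the condition $r\ge k-s$ forces $r\ge k-s>n(l)$, while the condition $i\le n(l)$ combined with $i\ge r$ forces $r\le n(l)$, a contradiction; hence $sp\,R_{kl}=\varnothing$ and $R_{kl}=0$. This leaves precisely $\supp R=\{(k,l)\in\bbI^2\mid l\le s,\ k\le n(l)+s\}$, as claimed.

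There is no real obstacle here; the only thing to be careful about is that two of the constraints ($r\le k$ from $\La_{k-r}\ne0$, and $i\ge r$ from the monomial $D^{i-r}$) interact nontrivially with the bound $i\le n(l)$, which is exactly what produces the $k\le n(l)+s$ cutoff in $\supp R$. Once those interactions are made explicit the proposition follows immediately.
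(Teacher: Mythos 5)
Your argument is correct: the three vanishing conventions ($\La_j=0$ outside $0\le j\le s$, $\tbinom{i}{r}=0$ for $i<r$, and $\p_{u_i}\La_l=0$ for $i>n(l)$) intersect to give exactly $sp\,R_{kl}$, and the emptiness of that set off $\supp R$ gives the second claim. The paper states this proposition without proof, treating it as immediate bookkeeping from the definitions, and your accounting is precisely the justification intended.
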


Further, we split 
\begin{equation*} 
	Q_{kl}=R_{kl}+Q'_{kl}+Q''_{kl}, \quad (k,l)\in\supp Q, 
\end{equation*}
where $\supp Q=\supp R\cup\supp Q'\cup\supp Q''\subset\bbI^2$, 
\begin{itemize} 
	\item 
		$\supp Q'=\cup_{(i,j)\in\supp\! R}\{(k,l)\in\bbI^2\mid k\le j, \ l\ge i, \ k+l\le i+j\}$, 
	\item 
		$\supp Q''=\cup_{(i,j)\in\supp\! R}\{(k,l)\in\bbI^2\mid k\ge j, \ l\le i, \ k+l\le i+j\}$. 
\end{itemize} 
By definition, 
$R_{kl}=0$ for $(k,l)\notin\supp R$,  
$Q^\sharp_{kl}=0$ for $(k,l)\notin\supp Q^\sharp$, where $\sharp=',''$.

\subsection{The example.}
Let us consider the operator
{$\La(D)=\la D+\frac12D\la$, $\la=\la(u_0,\dots,u_n)\in\cF$, $n\in\Z_+$.}
Here, $s=1$, $\La_0=\frac12D\la$, $\La_1=\la$, $n(i)=n+1-i$ for $i=0,1$, 
hence, $n(i)+i=n+1$. 
The operator $\La(D)$ is the Lagrange skew-adjoint by construction. 
According to Theorem \ref{T1}, the function $\la$ is defined by the system 
of differential equations $Q_{kl}=Q_{kl}(\la)=0$, $k,l\in\bbI$.  

As one can calculate, $\supp R=\{(k,l)\in\bbI^2\mid l\le1, k\le n(l)+1\}$; 

Let us denote $\la_{u_i}=\p_{u_i}\la$,  $i=0,\dots n$, 
and set 
\begin{equation*} 
	\tbinom{i}{k}'\!=\tbinom{i}{k}\!+2\tbinom{i}{k-1}, \ 
	\tbinom{i}{pkl}'=\tbinom{i}{p,k-1,l}-\tbinom{i}{p,k,l-1}=\tfrac{(k-l)i!}{p!k!l!}, 	
	 \  p+k+l=i+1.
\end{equation*} 
Now 
(remind, we assume the summation over the repeated upper and lower indices in the natural limits), 
\begin{align*} 
	R_{kl}&=\tbinom{i}{r}D^{i-r}\La_{k-r}\cdot\p_{u_i}\La_l=(k-r=0,1)= \\
		     &=\big(\tbinom{i}{k}D^{i-k}\La_0
		     	 +\tbinom{i}{k-1}D^{i+1-k}\La_1\big)\cdot\p_{u_i}\La_l= \\
		     &=\tfrac12\big(\tbinom{i}{k}D^{i+1-k}\la
		     	+2\tbinom{i}{k-1}D^{i+1-k}\la\big)\cdot\p_{u_i}\La_l
		      =\tfrac12\tbinom{i}{k}'D^{i+1-k}\la\cdot\p_{u_i}\La_l.	
\end{align*}
In particular,  
\begin{equation}\label{1} 
	R_{k1}=\tfrac12\tbinom{i}{k}'D^{i+1-k}\la\cdot\la_{u_i}, 
	\quad k=0,\dots,n+1, 
\end{equation}
while 
\begin{align*} 
	R_{k0}&=\tfrac14\tbinom{i}{k}'D^{i+1-k}\la\cdot\p_{u_i}(D\la)
					=(\p_{u_i}\com D=D\com\p_{u_i}+\p_{u_{i-1}})= \\ 
		      &=\tfrac14\tbinom{i}{k}'D^{i+1-k}\la\cdot D\la_{u_i} 
		             	  +\tbinom{i}{k}'D^{i+1-k}\la\cdot\la_{u_{i-1}}\!
		             	  =\big(\tbinom{i+1}{k}'\!-\tbinom{i}{k}'=\tbinom{i}{k-1}'\big)= \\ 
                   &=\tfrac14\big(\tbinom{i}{k}'(D^{i+2-k}\la\cdot\la_{u_i}
                                   +D^{i+1-k}\la\cdot D\la_{u_i})
                                   +\tbinom{i}{k-1}'D^{i+2-k)}\la\cdot\la_{u_i}\big)= \\
                    &=\tfrac14\big(D\big(\tbinom{i}{k}'D^{i+1-k}\la\cdot\la_{u_i}\big)
                                   +\tbinom{i}{k-1}'D^{i+1-(k-1)}\la\cdot\la_{u_i}\big),  
\end{align*}
that is 
\begin{equation}\label{2} 
	R_{k0}=\tfrac12\big(DR_{k1}+R_{k-1,1}\big), \quad k=0,\dots,n+2.
\end{equation}
Further, 
\begin{align*} 
	Q'_{kl}&=(-1)^j\tbinom{j}{p,k,l-i}D^p R_{ij}=(j=0,1)
	                =\tbinom{0}{p,k,l-i}D^p R_{i0}-\tbinom{1}{p,k,l-i}D^p R_{i1}= \\ 
                    &=\de^0_k R_{l0}-\de^0_k DR_{l1}-\de^1_{k+(l-i)}R_{i1}
                       =(DR_{l1}+R_{l-1,1}=2R_{l0})= \\ 
                    &=-(\de^0_k R_{l0}+\de^1_k R_{l1})=-R_{lk}, 
\end{align*}
that is 
\begin{equation}\label{3} 
	Q'_{kl}=-R_{lk}, \quad l=0,\dots,n+2, \quad k=0,1. 
\end{equation}
Still further, 
\begin{align*} 
	Q''_{kl}&=(-1)^i\tbinom{i}{p,k-j,l}D^p R_{ij}=(j=0,1)= \\
	              &=(-1)^i\tbinom{i}{pkl}D^p R_{i0}
	              +(-1)^i\tbinom{i}{p,k-1,l}D^p R_{i1}=(2R_{i0}=DR_{i1}+R_{i-1,1})= \\
	              &=\tfrac12(-1)^i\tbinom{i}{pkl}D^p(DR_{i1}+R_{i-1,1})
	                +(-1)^i\tbinom{i}{p,k-1,l}D^p R_{i1}= \\
                     &=\tfrac12(-1)^i\big(\tbinom{i}{p-1,k,l}-\tbinom{i+1}{pkl}
                        +2\tbinom{i}{p,k-1,l}\big)D^p R_{i1}= \\ 
                     &=\tfrac12(-1)^i\big(\tbinom{i}{p,k-1,l}-\tbinom{i}{p,k,l-1}\big)
                        D^p R_{i1}, 
\end{align*}
where we used the equality 
\begin{equation*} 
	\tbinom{i}{p-1,k,l}+\tbinom{i}{p,k-1,l}+\tbinom{i}{p,k,l-1}=\tbinom{i+1}{pkl}, 
	\quad\text{remind}\quad p+k+l=i+1.
\end{equation*}
Thus, 
\begin{equation}\label{4} 
	Q''_{kl}=\tfrac{(-1)^i}2\tbinom{i}{pkl}' D^p R_{i1}, 
	                  \quad i=0,\dots,n+1, \quad p+k+l=i+1\le n+2.
\end{equation}
Note, $Q''_{kl}=- \, Q''_{lk}$ due to the equality 
$\tbinom{i}{pkl}' =-\,\tbinom{i}{plk}' $ (see the definition).

At last, 
\begin{equation}\label{5} 
	Q_{kl}=Q_{kl}(\la)=R_{kl}-R_{lk}+\tfrac{(-1)^i}2\tbinom{i}{pkl}' 
	D^p R_{i1}=-\,Q_{lk}.
\end{equation}

Our aim now is to find all functions $\la=\la(u_0,\dots,u_n)\in\cF$, $n\in\bbZ_+$, 
such that $Q(\la)=(Q_{kl}(\la))=0$. 
It is convenient to start with the equation $Q_{1,n+1}=0$, $n\in\bbN$. 
Here, $R_{1,n+1}=0$, $R_{n+1,1}=\la\cdot\la_{u_n}$, while 
\begin{equation*} 
	Q''_{1,n+1}=\tfrac{(-1)^i}2\tbinom{i}{p,1,n+1}'D^p R_{i1}
	                   =\tfrac{(-1)^{n+1}}2\tbinom{n+1}{0,1,n+1}'R_{n+1,1}
		             =\tfrac{(-1)^n\, n}2 R_{n+1,1}.
\end{equation*}
Thus, 
\begin{equation*} 
	Q_{1,n+1}=-R_{n+1,1}+\tfrac{(-1)^n n}2 R_{n+1,1}
		=\big(\tfrac{(-1)^n\, n}2-1\big)R_{n+1,1}
		\begin{cases}=0, n=2,\\ \ne 0, n>2\end{cases}.
 \end{equation*} 
Hence, the equation $Q_{1,n+1}=0$ implies $\p_{u_n}\la=0$ for $n>2$, 
in other words,  $\la=\la(u_0,u_1,u_2)$. 
The case $n=2$ needs further detailed research. 
Namely, here, 
\begin{align*} 
	&Q_{04}=0-R_{40}+\tfrac{(-1)^2}2 R_{31}=-\tfrac12 R_{31}
		+\tfrac{(-1)^2}2 R_{31}=0, \\ 
	&Q_{03}=0-R_{30}-\tfrac12 R_{21}+\tfrac32 DR_{31}=DR_{31}-R_{21}, \\
	&Q_{02}=0-R_{20}+\tfrac12\big(3D^2 R_{31}-2DR_{21}+R_{11}\big)
		=\tfrac32D\big(DR_{31}-R_{21}\big)=\tfrac32DQ_{03}, \\
	&Q_{01}=R_{01}-R_{10}+\tfrac12\big(D^3R_{31}-D^2R_{21}+DR_{11}-
		R_{01}\big)=\tfrac12 D^2Q_{03}, \\
	&Q_{13}=0-R_{31}+R_{31}=0, \\
	&Q_{12}=0-R_{21}+\big(\tfrac32 DR_{31}-\tfrac12 R_{21}\big)
		=\tfrac32Q_{03}, 
\end{align*}
where we used the formulas (\ref{2}) and (\ref{5}).
All other equations are non relevant. Thus, the only relevant equation is 
the following: $Q_{03}=D R_{31}-R_{21}=0$. 
Here, 
\begin{align*} 
	R_{31}&=\la\cdot\la_{u_2}, \quad 
	   	                DR_{31}=D\la\cdot\la_{u_2}+\la\cdot D\la_{u_2}, 
	   	                \quad R_{21}=\la\cdot\la_{u_1}+\tfrac52 D\la\cdot \la_{u_2}, \\ 
	Q_{03}&=\la\cdot D\la_{u_2}-\tfrac32 D\la\cdot\la_{u_2}-\la\cdot\la_{u_1}
		         =u_3\big(\la\cdot\la_{u_2u_2}-\tfrac32\la^2_{u_2}\big)
		         +\tilde{Q}_{03}(u_0,u_1,u_2). 
\end{align*} 
The equation $\la\cdot\la_{u_2u_2}-\tfrac32\la^2_{u_2}=0$ has the general solution 
$\la=(\phi u_2+\psi)^{-2}$, where the functions $\phi,\psi(u_0,u_1)\in\cF$ are arbitrary. 
The reduced equation has the form
\begin{equation*} 
	\tilde{Q}_{03}=\la\cdot(\la_{u_1u_2}u_2+\la_{u_0u_2}u_1)
		-\tfrac32(\la_{u_1}u_2+\la_{u_0}u_1)\cdot\la_{u_2}-\la\cdot\la_{u_1}=0, 
\end{equation*}
where 
\begin{align*}
	&\la_{u_i}=-2(\phi u_2+\psi)^{-3}(\phi_{u_i}u_2+\psi_{u_i}), \ i=0,1, 
		\quad \la_{u_2}=-2(\phi u_2+\psi)^{-3}\phi, \\ 
	&\la_{u_iu_2}=2(\phi u_2+\psi)^{-4}\big(3(\phi_{u_i}u_2+\psi_{u_i})\phi
		-(\phi u_2+\psi)\phi_{u_i}\big), \ i=0,1.
\end{align*}
After the substitution and some simple computations we get the final form of this equation 
\begin{equation*} 
	\tilde{Q}_{03}=2(\phi u_2+\psi)^{-5}(\psi_{u_1}-\phi_{u_0}u_1)=0. 
\end{equation*}
The resulting equation $\psi_{u_1}-\phi_{u_0}u_1=0$ has the general solution 
\begin{equation}\label{6} 
	\phi(u_0,u_1)=\chi_v(u,v)\big|_{u=u_0,v=u_1^2/2}, \quad 
	\psi(u_0,u_1)=\chi_u(u,v)\big|_{u=u_0,v=u_1^2/2},
\end{equation}
where $\chi(u,v)\in\cC^\infty(\bbR^2)$ is an arbitrary smooth function. 

We summarize our calculations as the following theorem.
\begin{theorem}[cf. \cite{GD},\cite{DN}] 
The differential operator $\La(D)=\la D+\tfrac12 D\la$, $\la\in\cF$, is Hamiltonian 
iff $\la=(\phi u_2+\psi)^{-2}$, where the functions $\phi,\psi(u_0,u_1)\in\cF$ 
are defined by the formulas (\ref{6}). 
\end{theorem}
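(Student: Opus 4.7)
The plan is to apply Theorem~\ref{T1} directly to the operator $\La(D)=\la D+\tfrac12 D\la$, for which $s=1$, $\La_0=\tfrac12 D\la$, $\La_1=\la$, and to reduce the infinite system $Q(\la)=0$ to a single scalar PDE in $\la$. The preliminary simplification is to use that $R_{kl}=0$ for $l\ge 2$ (since $\La_l=0$), together with the recursion $R_{k0}=\tfrac12(DR_{k1}+R_{k-1,1})$ from~(\ref{2}), so that every entry of $R$ is expressed in terms of the single family $R_{k1}=\tfrac12\binom{i}{k}'D^{i+1-k}\la\cdot\la_{u_i}$ from~(\ref{1}).

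Next I would use the ``top-right'' equation $Q_{1,n+1}=0$ to cap the order of $\la$. A direct computation gives $R_{1,n+1}=0$, $R_{n+1,1}=\la\cdot\la_{u_n}$, and $Q''_{1,n+1}=\tfrac{(-1)^n n}{2}R_{n+1,1}$, whence $Q_{1,n+1}=\bigl(\tfrac{(-1)^n n}{2}-1\bigr)R_{n+1,1}$. The scalar factor vanishes only for $n=2$, so the equation forces $\la_{u_n}=0$ for every $n>2$, and the problem reduces to $\la=\la(u_0,u_1,u_2)$ with a finite list of remaining constraints.

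The third step is the main technical one: to check that for $\la=\la(u_0,u_1,u_2)$ the entire system collapses to the single equation $Q_{03}=DR_{31}-R_{21}=0$. Using the explicit form of $Q_{kl}$, the recursion~(\ref{2}), and the antisymmetry $Q_{kl}=-Q_{lk}$, a term-by-term tabulation yields $Q_{04}=Q_{13}=0$ identically, while $Q_{02}=\tfrac32 DQ_{03}$, $Q_{01}=\tfrac12 D^2Q_{03}$, and $Q_{12}=\tfrac32 Q_{03}$; all other index pairs are ruled out by the support analysis for $\supp Q$. Hence the vanishing of $Q_{03}$ is both necessary and sufficient for Hamiltonicity.

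Finally, $Q_{03}(\la)=\la\cdot D\la_{u_2}-\tfrac32 D\la\cdot\la_{u_2}-\la\cdot\la_{u_1}$ is linear in $u_3$; isolating its $u_3$-coefficient gives the second-order ODE $\la\cdot\la_{u_2 u_2}-\tfrac32\la_{u_2}^2=0$ in $u_2$, whose general solution is $\la=(\phi u_2+\psi)^{-2}$ with $\phi,\psi$ depending only on $u_0,u_1$. Substituting this ansatz into the $u_3$-independent remainder of $Q_{03}$ and simplifying collapses the whole equation to $2(\phi u_2+\psi)^{-5}(\psi_{u_1}-\phi_{u_0}u_1)=0$, i.e.\ the integrability condition $\psi_{u_1}=\phi_{u_0}u_1$, whose general solution is precisely the potential representation~(\ref{6}). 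The main obstacle I anticipate is the third step: the combinatorial bookkeeping needed to certify that no index pair $(k,l)$ hides an independent constraint hinges on careful manipulation of the trinomial coefficients $\binom{i}{pkl}'$, and it is the kind of step where a single overlooked case would invalidate the necessity direction of the theorem.
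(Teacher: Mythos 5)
Your proposal is correct and follows essentially the same route as the paper: the same reduction via $Q_{1,n+1}=0$ to $\la=\la(u_0,u_1,u_2)$, the same tabulation showing that all remaining equations are consequences of $Q_{03}=DR_{31}-R_{21}=0$, and the same two-stage integration (the $u_3$-coefficient giving $\la=(\phi u_2+\psi)^{-2}$, then the residual equation $\psi_{u_1}-\phi_{u_0}u_1=0$ giving the potential representation~(\ref{6})). The combinatorial bookkeeping you flag as the main risk is handled in the paper by exactly the explicit case-by-case computation you describe.
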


{\bf Acknowledgments.} 
This work is supported by the Russian Science Foundation under grant 14-50-00005 
and performed in Steklov Mathematical Institute of Russian Academy of Sciences.

\newpage

\end{document}